\let\MYcaption\@makecaption
\newcommand{\ov}[1]{\overline{#1}}
\newcommand{\eqdef}{\stackrel{\mbox{\scriptsize def}}{=}}
\newcommand{\realsnng}{\mathbb{R}^{\ge0}}
\newcommand{\cala}{\mathcal{A}}
\newcommand{\cald}{\mathcal{D}}
\newcommand{\calf}{\mathcal{F}}
\newcommand{\calo}{\mathcal{O}}
\newcommand{\calr}{\mathcal{R}}
\newcommand{\cals}{\mathcal{S}}
\newcommand{\calw}{\mathcal{W}}
\newcommand{\supp}{{\mathtt{supp}}}
\newcommand{\Dists}{\mathbb{D}} 
\newcommand{\Prob}{\mathrm{Pr}}
\newcommand{\randassign}{\ensuremath{\stackrel{\mathrm{\$}}{\leftarrow}}}
\newcommand{\diverge}[2]{\mathit{D}(#1 \parallel #2)}
\newcommand{\Dinf}[0]{\mathit{D}_{\infty}}
\newcommand{\maxdiverge}[2]{\Dinf(#1 \parallel #2)}
\newcommand{\samp}[0]{n}
\newcommand{\epsan}[0]{\varepsilon_{\!\alpha,\samp}}
\newcommand{\epsanp}[0]{\varepsilon_{\!\alpha,\samp'}}
\newcommand{\StatEL}{\mbox{\rm StatEL}}
\renewcommand{\phi} {\varphi}
\newcommand{\erightarrow}{\supset}
\newcommand{\Var}{\mathtt{Mes}}
\newcommand{\phisyn}[0]{\phi_{\mathrm{syn}}}
\newcommand{\PR}[1]{\mathop{\mathbb{P}_{#1}}}
\newcommand{\MKa}{\mathop{\mathsf{K}_{a}}}
\newcommand{\MKea}{\mathop{\mathsf{K}_{\varepsilon_{\!\alpha\!,\samp}}}}
\newcommand{\MKeap}{\mathop{\mathsf{K}_{\varepsilon_{\!\alpha\!,\samp'}}}}
\newcommand{\MKae}{\mathop{\mathsf{K}_{a,\varepsilon}}}
\newcommand{\MKaz}{\mathop{\mathsf{K}_{a,0}}}
\newcommand{\MPa}{\mathop{\mathsf{P}_{\!a}}}
\newcommand{\MPea}{\mathop{\mathsf{P}_{\!\varepsilon_{\!\alpha\!,\samp}}}}
\newcommand{\MPeapp}{\mathop{\mathsf{P}_{\!\varepsilon_{\!\alpha\!,\samp''}}}}
\newcommand{\MPe}{\mathop{\mathsf{P}_{\varepsilon}}}
\newcommand{\MPae}{\mathop{\mathsf{P}_{a,\varepsilon}}}
\newcommand{\MPaz}{\mathop{\mathsf{P}_{a,0}}}
\newcommand{\M}{\mathfrak{M}}
\newcommand{\Rea}[0]{\calr_{\varepsilon_{\!\alpha\!,n}}}
\newcommand{\Reps}[0]{\calr_{\!\varepsilon}}
\newcommand{\Raeps}[0]{\calr_{a,\varepsilon}}
\newcommand{\Raz}[0]{\calr_{a,0}}
\newcommand{\wre}[0]{\mathit{w_{\sf real}}}
\newcommand{\wid}[0]{\mathit{w_{\sf ideal}}}
\newcommand{\alg}{\mathit{A}}
\newif\ifcommentson\commentsonfalse
\newif\ifconferenceon\conferenceonfalse
\newcommand{\arxiv}[1]{}
\newcommand{\conference}[1]{#1}
\newcommand{\conferenceShort}[1]{}
\newcommand{\arxiv}[1]{#1}
\newcommand{\conference}[1]{}
\newcommand{\conferenceShort}[1]{}
\newcommand{\journal}[1]{}
\newcommand{\commentsize}[0]{.90\textwidth}
\newcommand{\commentYK}[1]{\begin{center} \parbox{\commentsize}{\textbf{\textcolor{black}{Comment Y.}} \textcolor{red}{#1} }\end{center}}
\newcommand{\replyYK}[1]{\begin{center} \parbox{\commentsize}{\textbf{Reply Y.} \textcolor{blue}{#1} }\end{center}}
\newcommand{\commentY}[1]{\marginpar{\footnotesize \color{red} {\bf Y:} \textsf{\scriptsize #1}}}
\newcommand{\replyY}[1]{\marginpar{\footnotesize \color{red} {\bf Y:} \textsf{\scriptsize #1}}}
\newcommand{\commentYK}[1]{}
\newcommand{\replyYK}[1]{}
\newcommand{\commentY}[1]{}
\newcommand{\replyY}[1]{}
\newcommand{\colorR}[1]{\textcolor{red}{#1}}
\newcommand{\pagelimitmarker}[1]{~\\ {\colorR{\ifthenelse{\thepage>#1}{\Huge Exceeding the page limit}{\huge Within the page limit}}}~\\ {\huge{\colorR{~~Page Limit\,\,\,\,\, = #1}}}~\\ {\huge{\colorR{~~Current Page = $\thepage$}}}}
\begin{document}
\title{Statistical Epistemic Logic
\thanks{This work was supported by JSPS KAKENHI Grant Number JP17K12667, and by Inria under the project LOGIS.}}
\author{Yusuke Kawamoto\inst{1}\orcidID{0000-0002-2151-9560}}
\authorrunning{Y. Kawamoto}
\institute{National Institute of Advanced Industrial Science \\ 
and Technology (AIST), Tsukuba, Japan \\
\conference{\email{yusuke.kawamoto.aist@gmail.com}}}
\maketitle              
\begin{abstract}
We introduce a modal logic for describing statistical knowledge, which we call \emph{statistical epistemic logic}. We propose a Kripke model dealing with probability distributions and stochastic assignments, and show a stochastic semantics for the logic. To our knowledge, this is the first semantics for modal logic that can express the statistical knowledge dependent on non-deterministic inputs and the statistical significance of observed results. By using statistical epistemic logic, we express a notion of statistical secrecy with a confidence level. We also show that this logic is useful to formalize statistical hypothesis testing and differential privacy in a simple and abstract manner. 
\keywords{Epistemic logic \and Possible world semantics \and Divergence 
\and Statistical hypothesis testing \and Differential privacy
}
\end{abstract}

\section{Introduction}
\label{sec:intro}
Knowledge representation and reasoning have been studied in two research areas: \emph{logic} and \emph{statistics}.
Broadly speaking, logic describes our knowledge using formal languages and reasons about it using symbolic techniques,
while statistics interprets collected data having random variation and infers properties of their underlying probability models.
As research advances demonstrate, logical and statistical approaches are respectively successful in many applications, including 
artificial intelligence, 
software engineering, and information security.

The techniques of these two approaches are basically orthogonal and could be integrated to get the best of both worlds.
For example, in a large system with artificial intelligence (e.g., an autonomous car), 
both rule-based knowledge and statistical machine learning models may be used, and the way of combining them 
would be crucial to the performance and security of the whole system.
However, even in theoretical research on knowledge models, there still remains much to be done to integrate techniques from the two approaches.
For a very basic example, 
\emph{epistemic logic}~\cite{vonWright:51:book}, a formal logic for representing and reasoning about knowledge, has not yet been able to model ``statistical knowledge'' 
with sampling and statistical significance, although a lot of epistemic models~\cite{Fagin:95:book,Halpern:03:book,Halpern:03:CSFW}
have been proposed so far.

One of the important challenges in integrating logical and statistical knowledge is to design a logical model for statistical knowledge, 
which can be updated by a limited number of sampling of probabilistic events and by the non-deterministic inputs from an external environment.
Here we note that non-deterministic inputs are essential to model the security of the system,
because we usually do not have a prior knowledge of the probability distribution of adversarial inputs and need to reason about the worst scenarios caused by the attack.
Nevertheless, to the best of our knowledge, no previous work on epistemic logic has proposed an abstract model for the statistical knowledge that involves non-deterministic inputs and the statistical significance of observed results.

In the present paper, we propose an epistemic logic for describing statistical knowledge.
To define its semantics, we introduce a variant of a Kripke model~\cite{Kripke:63:MLQ} in which each possible world is defined as a probability distribution of states and each variable is probabilistically assigned a value.
In this model, the stochastic behaviour of a system is modeled as a distribution of states at each world, and each non-deterministic input to the system corresponds to a distinct possible world.
As for applications of this model, we define an accessibility relation as a statistical distance between distributions of observations, and show that our logic is useful to formalize statistical hypothesis testing and differential privacy~\cite{Dwork:06:ICALP} of statistical data.

\paragraph{Our contributions.}
The main contributions of this work are as follows:
\begin{itemize}
\item We introduce a modal logic, called \emph{statistical epistemic logic} (\StatEL{}), to describe statistical knowledge.
\item We propose a Kripke model incorporating probability distributions and stochastic assignments by regarding each possible world as a distribution of states and by defining an accessibility relation using a metric/divergence between distributions.
\item We introduce a stochastic semantics for \StatEL{} based on the above models.
As far as we know, this is the first semantics for modal logic that can express the statistical knowledge dependent on non-deterministic inputs and the statistical significance of observed results.
\item We present basic properties of the probability quantification and epistemic modality in \StatEL{}.
In particular, we show that the transitivity and Euclidean axioms rely on the agent's capability of observation in our model.
\item By using \StatEL{} we introduce a notion of statistical secrecy with a significance level $\alpha$.
We also show that \StatEL{} is useful to formalize statistical hypothesis testing 
and differential privacy in a simple and abstract manner.
\end{itemize}

\paragraph*{Paper organization.}
The rest of this paper is organized as follows.
Section~\ref{sec:preliminaries} introduces background and notations used in this paper.
Section~\ref{sec:running-example} presents an example of coin flipping to explain the motivation for a logic of statistical knowledge.
Section~\ref{sec:logic} shows the syntax and semantics of the statistical epistemic logic \StatEL{}.
Section~\ref{sec:properties} presents basic properties of the logic.
As for applications, Sections~\ref{sec:formal:HT} and~\ref{sec:formal:DP} respectively model statistical hypothesis testing and statistical data privacy using \StatEL{}.
Section~\ref{sec:related} presents related work and Section~\ref{sec:conclude} concludes.

\section{Preliminaries}
\label{sec:preliminaries}
In this section we 
recall the definitions of divergence and metrics, which are used 
in later sections 
to quantitatively model an agent's capability of distinguishing possible worlds.

\subsection{Notations}
\label{subsec:notations-pd}

Let $\realsnng$ be the set of non-negative real numbers,
and $[0, 1] = \{ r\in\realsnng \mid r \le 1 \}$.
We denote by $\Dists\calo$ the set of all probability distributions over a set~$\calo$. 
For a finite set $\calo$ and a distribution $\mu\in\Dists\calo$, the probability of sampling a value $y$ from $\mu$ is denoted by $\mu[y]$.
For a subset $R\subseteq\calo$, let $\mu[R] = \sum_{y\in R} \mu[y]$.
The \emph{support} of a distribution $\mu$ over a finite set $\calo$ is
$\supp(\mu) = \{ v \in \calo \colon \mu[v] > 0 \}$.
For a set $\cald$, a randomized algorithm $\alg:\cald\rightarrow\Dists\calo$ and a set $R\subseteq\calo$ we denote by $\alg(d)[R]$ the probability that given input $d\in\cald$, $\alg$ outputs one of the elements of $R$.

\subsection{Metric and Divergence}
\label{subsec:divergence}

A \emph{metric} over a non-empty set $\calo$ is a function $d: \calo\times\calo \rightarrow \realsnng$ such that for all $y, y', y'' \in \calo$,
(i)
$d(y, y') \ge 0$;
(ii)
$d(y, y') = 0$ iff $y = y'$;
(iii)
$d(y, y') = d(y', y)$;
(iv)
$d(y, y'') \le d(y, y') + d(y', y'')$.
Recall that (iii) and (iv) are respectively referred to as symmetry and subadditivity.

A \emph{divergence} over a non-empty set $\calo$ is a function $\diverge{\cdot}{\cdot}: \Dists\calo\times\Dists\calo\rightarrow\realsnng$ such that 
for all $\mu,\mu'\in\Dists\calo$, 
(i)
$\diverge{\mu}{\mu'}\ge0$ and 
(ii)
$\diverge{\mu}{\mu'}=0$ iff $\mu=\mu'$.
Note that a divergence may not be symmetric or subadditive.

To describe a statistical hypothesis testing in Section~\ref{sec:formal:HT},
we recall the definition of $\chi^2$ divergence due to Pearson~\cite{Pearson:1900:LEDPMJS} as follows:

\begin{definition}[Pearson's $\chi^2$ divergence]\label{def:chi-divergence}\rm
Given two distributions $\mu, \mu'$ over a finite set $\calo$,\,
the \emph{$\chi^2$-divergence} $D_{\chi^2}(\mu \parallel  \mu')$ of $\mu$ from $\mu'$ is defined by:
\[
D_{\chi^2}(\mu \parallel  \mu') =
\!\sum_{y\in\supp(\mu)}\hspace{-1ex} 
\frac{ (\mu'[y] - \mu[y])^2 }{ \mu[y] }
{.}
\]
\emph{$\chi^2$ statistics} is the multiplication of $\chi^2$-divergence with a sample size~$\samp$.
\end{definition}

To introduce a notion of statistical data privacy in Section~\ref{sec:formal:DP}, 
we recall the definition of 
the max-divergence $\Dinf$ 
as follows.
\begin{definition}[Max divergence]\label{def:max-divergence}\rm
For two distributions $\mu, \mu'$ over a finite set~$\calo$,\, 
the \emph{max divergence} $\maxdiverge{\mu}{\mu'}$ 
of $\mu$ from $\mu'$ is defined~by:
\[
\maxdiverge{\mu}{\mu'} = 
\max_{\substack{R\subseteq\supp(\mu)}}\, 
\ln \frac{ \mu[R] }{ \mu'[R] }
{.}
\]
\end{definition}

Note that neither $D_{\chi^2}$ nor 
$\Dinf$ 
is symmetric.

\section{Motivating Example}
\label{sec:running-example}
In this section we present a motivating example to explain why we need to introduce a new model for epistemic logic to describe statistical knowledge.

\begin{figure}[t]
\begin{subfigure}[t]{0.45\textwidth}
\centering
\mbox{\raisebox{0pt}{\includegraphics[ width=0.99\textwidth]{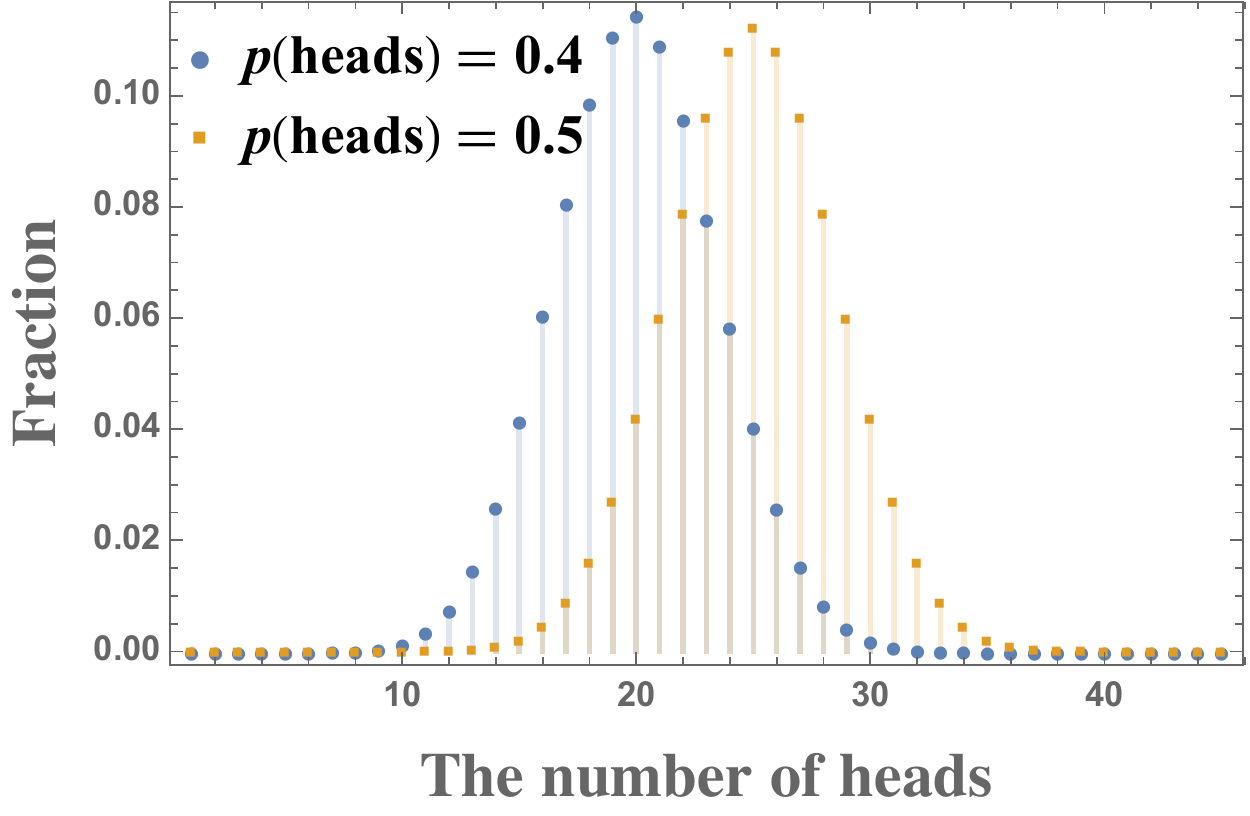}}}
\caption{Given $50$ coin flips, the two distributions overlap much.
\label{fig:binomial50}}
\end{subfigure}
\hfill
\begin{subfigure}[t]{0.45\textwidth}
\centering
 \mbox{\raisebox{0pt}{\includegraphics[ width=1.00\textwidth]{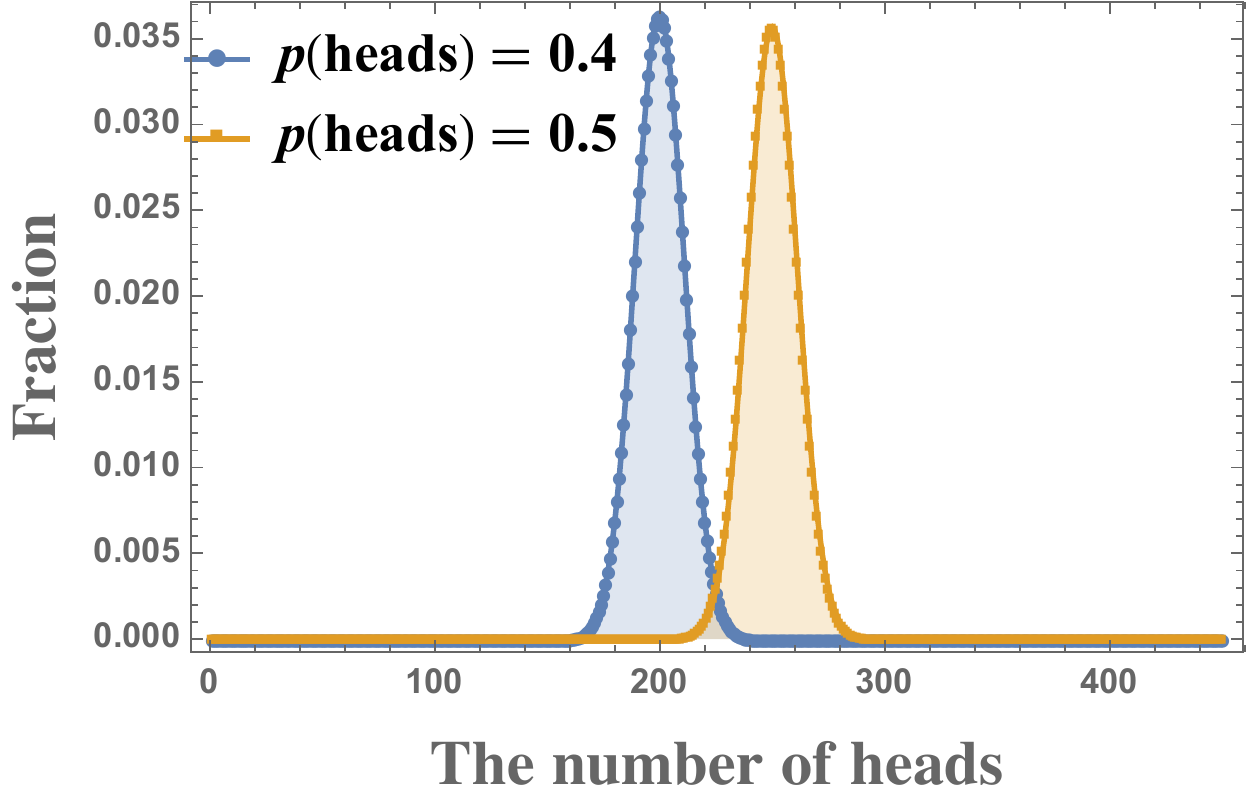}}}
\caption{Given $500$ coin flips, the two distributions are distinguished more clearly.
\label{fig:binomial500}}
\end{subfigure}
\vspace{-2ex}
\caption{The frequency distributions of the numbers of heads in coin flipping.
\label{fig:frequency:heads}}
\end{figure}

\begin{example}[Coin flipping]\label{eg:coin}
Let us consider a simple running example of flipping a coin in two possible worlds $w_0$ and $w_1$ respectively.
We assume that in the world $w_0$ the coin is fair 
(represented by $p(heads) = 0.5$),
whereas in $w_1$ the probability of getting a heads is $0.4$
(represented by $p(heads) = 0.4$).
Here we do not have any prior belief on the probabilities of the worlds $w_0$ and $w_1$. 
This does not mean $p(w_0) = p(w_1) = 0.5$, but means 
we have no idea on the values of $p(w_0)$ and $p(w_1)$ at all,
i.e., either $w_0$ or $w_1$ is chosen non-deterministically.
\end{example}

When we flip a coin just once and observe its outcome (heads or tails), we do not know whether the coin is fair or biased, that is, we cannot tell whether we are located in the world $w_0$ or $w_1$.

As shown in Fig.~\ref{fig:frequency:heads}, however, when we increase the number $\samp$ of coin flips, we can more clearly see the difference between the numbers of getting heads in $w_0$ and in $w_1$.
If the fraction of observing heads goes to $0.5$ (resp. $0.4$), then we learn we are located in the world $w_0$ (resp. $w_1$) with a stronger confidence,
namely, we have a stronger belief that the coin is fair (resp. biased).
This implies that a larger number of observing the outcome enables us to distinguish two possible worlds more clearly, hence to obtain a stronger belief.

\begin{figure}[t]\label{fig:compositions}
\centering
\begin{subfigure}[t]{0.47\textwidth}
\centering
\begin{picture}(140, 50)
 \put(70,25){\oval(140,50)}
 \put(35,30){\oval(40,20)}
 \put(105,30){\oval(40,20)}
 \put( 24,  28){$heads$}
 \put( 96,  28){$tails$}
 \put( 10,  10){\scriptsize probability 0.5}
 \put( 79,  10){\scriptsize probability 0.5}
\end{picture}
\caption{The world $w_0$ with the fair coin.
\label{fig:world:fair}}
\end{subfigure}\hspace{0.1ex}\hfill
\begin{subfigure}[t]{0.47\textwidth}
\centering
\begin{picture}(140, 50)
 \put(70,25){\oval(140,50)}
 \put(35,30){\oval(40,20)}
 \put(105,30){\oval(40,20)}
 \put( 24,  28){$heads$}
 \put( 96,  28){$tails$}
 \put( 10, 10){\scriptsize probability 0.4}
 \put( 79, 10){\scriptsize probability 0.6}
\end{picture}
\caption{The world $w_1$ with the biased coin.
\label{fig:world:biased}}
\end{subfigure}
\caption{One of the possible worlds (i.e., $w_0$ or $w_1$) is chosen non-deterministically. 
Then one of the states (i.e., $heads$ or $tails$) is chosen probabilistically.
\label{fig:worlds}}
\end{figure}
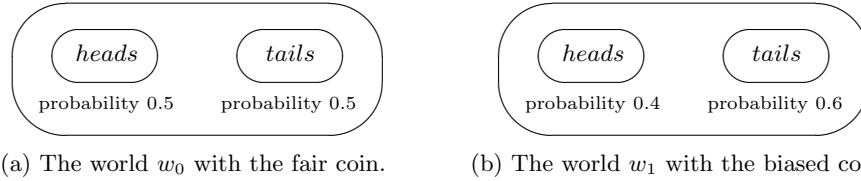

To model such statistical beliefs, we regard each possible world as a probability distribution of two states $heads$ and $tails$
as shown in Fig.~\ref{fig:worlds}
(e.g., $w_1[heads] = 0.4$ and $w_1[tails] = 0.6$).
Then for a divergence $D$ between two distributions, we define an accessibility relation $\Reps$ between worlds such that 
for any worlds $w$ and $w'$,\,
$(w, w')\in\Reps$ iff $\diverge{w\!}{\!w'} \le \varepsilon$.
Then $(w_0, w_1)\in\Reps$ for a smaller threshold $\varepsilon$ represents that a larger number of sampling is required to distinguish $w_0$ from~$w_1$.

This relation $\Reps$ is used to formalize statistical knowledge in a model of epistemic logic in Section~\ref{sec:logic}.
Intuitively, given a threshold $\varepsilon$ determining a confidence level,
we say that we know a proposition $\phi$ in a world $w$ if $\phi$ is satisfied in all possible worlds that are indistinguishable from $w$ in terms of $\Reps$.
In Section~\ref{sec:formal:HT} we will revisit the coin flipping example to see how we formalize it using our logic.

To our knowledge, no previous work on epistemic logic has modeled a statistical knowledge that depends on the agent's capability of observing events.
In fact, in most of the Kripke models used in previous work, a possible world represents a single state instead of a probability distribution of states, hence the relation between possible worlds does not involve the probability of distinguishing them.
Therefore, no prior work on epistemic logic has proposed an abstract model for the statistical knowledge that involves the sample size of observing random variables and the statistical significance of the observed results.

\section{Statistical Epistemic Logic (\StatEL{})}
\label{sec:logic}
In this section we introduce the syntax and semantics of the \emph{statistical epistemic logic} (\StatEL{}).

\subsection{Syntax}
\label{sub:syntax}

We first present the syntax of the statistical epistemic logic as follows.
To express both deterministic and probabilistic properties, we introduce two levels of formulas: \emph{static formulas} and \emph{epistemic formulas}.
Intuitively, a static formula represents a proposition that can be satisfied at a state with probability $1$, while an epistemic formula represents a proposition that can be satisfied at a probability distribution of states with some probability.

Formally, let $\Var$ be a set of symbols called \emph{measurement variables}, and
$\Gamma$ be a set of atomic formulas of the form $\gamma(x_1, x_2, \ldots, x_n)$ for a predicate symbol $\gamma$ and $x_1, x_2, \ldots, x_n\in\Var$ ($n \ge 0$).
Let $I \subseteq [0, 1]$ be a finite union of intervals, and $\cala$ be a finite set of indices (typically associated with the names of agents and/or statistical tests).
Then the static and epistemic formulas are defined by:
\begin{itemize}
\item[] Static formulas:~~
$\psi \mathbin{::=}
 \gamma(x_1, x_2, \ldots, x_n) \mid
 \neg \psi \mid \psi \wedge \psi$
\item[] Epistemic formulas:~~
$\phi \mathbin{::=}
 \PR{I} \psi \mid \neg \phi \mid \phi \wedge \phi \mid
 \psi \erightarrow \phi \mid \MKa \phi$
\end{itemize}
where 
$a\in\cala$.
Let $\calf$ be the set of all epistemic formulas.
Note that we have no quantifiers over measurement variables. (See Section~\ref{sub:interpretation}.)

The \emph{probability quantification} $\PR{I} \psi$ represents that a static formula $\psi$ is satisfied with a probability belonging to a set $I$.
For instance, $\PR{(0.5, 1]} \psi$ represents that $\psi$ holds with a probability greater than $0.5$.
The \emph{non-classical implication} $\erightarrow$ is used to represent conditional probabilities.
For example, by $\psi_0 \erightarrow \PR{I} \psi_1$ we represent that the conditional probability of $\psi_1$ given $\psi_0$ is included in a set $I$.
The \emph{epistemic knowledge} $\MKa \phi$ expresses that an agent $a$ knows $\phi$.
The formal meaning of these operators will be shown in the definition of semantics.

As syntax sugar, we use \emph{disjunction} $\vee$, \emph{classical implication} $\rightarrow$, and \emph{epistemic possibility operator} $\MPa$, defined by:
$\phi_0 \vee \phi_1 \mathbin{::=} \neg (\neg \phi_0 \wedge \neg \phi_1)$,
$\phi_0 \rightarrow \phi_1 \mathbin{::=} \neg \phi_0 \vee \phi_1$,
and $\MPa{\phi} \mathbin{::=} \neg \MKa \neg \phi$.
When $I$ is a singleton $\{ i \}$, we abbreviate $\PR{[i, i]}$ as $\PR{i}$.

\subsection{Modeling of Systems}
\label{sub:system}

In this work we deal with a simple stochastic system with measurement variables.
Let $\calo$ be the finite set of all data that can be assigned to the measurement variables in $\Var$.
We assume that all possible worlds share the same domain $\calo$.
We define a \emph{stochastic system} as a pair $(S, \sigma)$ consisting of:
\begin{itemize}
\item a stochastic program $S$ that deals with input and output data through measurement variables in $\Var$, behaves deterministically or probabilistically (by using some randomly generated data), and terminates with probability $1$;
\item a \emph{stochastic assignment} $\sigma:\Var\rightarrow\Dists\calo$ representing that each measurement variable $x$ has an observed value $v$ with probability $\sigma(x)[v]$.
\end{itemize}
Here we present only a general model and do not specify the data type of those measurement variables,
which can be (sequences of) bit strings, floating point numbers, texts, or other types of data.
Thanks to the assumption on the program termination and on the finite range of data, the program $S$ can reach finitely many states.
For the sake of simplicity, our model does not take timing into account.
Extension to time and temporal modality is left for future work.

\subsection{Distributional Kripke Model}
\label{sub:Kripke}

To define a semantics for \StatEL{}, we recall the notion of a Kripke model~\cite{Kripke:63:MLQ}:
\begin{definition}[Kripke model] \label{def:Kripke-model} \rm
Given a set $\Gamma$ of atomic formulas, a \emph{Kripke model} is defined as a triple $(\calw, \calr, V)$ consisting of a non-empty set $\calw$, a binary relation $\calr$ on $\calw$, and a function 
$V$ that maps each atomic formula $\gamma\in\Gamma$ to a subset $V(\gamma)$ of $\calw$.
The set $\calw$ is called a \emph{universe}, its elements are called \emph{possible worlds}, $\calr$ is called an \emph{accessibility relation}, and $V$ is called a \emph{valuation}.
\end{definition}

Now we introduce a Kripke model called a ``distributional'' Kripke model where each possible world is a probability distribution of states over $\cals$ and 
each world $w$ is associated with a stochastic assignment $\sigma_w$ to measurement variables.

\begin{definition}[Distributional Kripke model] \label{def:dist-Kripke-model} \rm
Let $\cala$ be a finite set of indices (typically associated with the names of agents and/or statistical tests),
$\cals$ be a finite set of states\footnote{It is left for future work to investigate the case of infinite numbers of states.}, and $\calo$ be a finite set of data.
A \emph{distributional Kripke model} 
is a tuple 
$\M =(\calw, (\calr_a)_{a\in\cala}, (V_s)_{s\in\cals})$ 
consisting of:
\begin{itemize}
\item a non-empty set\footnote{Since $\calw$ is not a multiset, each world in $\calw$ is a different distribution of states. However, this is still expressive enough when we take $\cals$ to be sufficiently large.
} $\calw$ of probability distributions of states over $\cals$;
\item for each $a\in\cala$, an accessibility relation $\calr_a \subseteq \calw \times \calw$;
\item for each $s\in\cals$, a valuation $V_s$ that maps each $k$-ary predicate $\gamma$ to a set $V_s(\gamma) \subseteq \calo^k$.
\end{itemize}
We assume that each $w\in\calw$ is associated with a function $\rho_w: \Var\times\cals\rightarrow\calo$ that maps each measurement variable $x$ to its value $\rho_w(x, s)$ observed at a state~$s$.
We also assume that each state $s$ in a world $w$ is associated with the assignment $\sigma_s: \Var\rightarrow\calo$ defined by $\sigma_s(x) = \rho_w(x, s)$.
\end{definition}

Note that this model assumes a constant domain $\calo$; i.e., all measurement variables range over the same set $\calo$ in every world.
Since each world $w$ is a probability distribution of states, we denote by $w[s]$ the probability that a state $s$ is sampled from $w$.
Then the probability that a variable $x$ has a value $v$ in a world $w$ is given by:
\[
\sigma_w(x)[v] = 
\sum_{\substack{s\in\supp(w),\, \sigma_s(x) = v}} w[s]
{.}
\]
This means that when a state $s$ is drawn from the distribution $w$, an input value $\sigma_s(x)$ is sampled from the distribution $\sigma_w(x)$.

\subsection{Divergence-based Accessibility Relation}
\label{sub:divergence-relation}

Next we introduce a family of accessibility relations used in typical statistical inferences.
Since many notions of statistical distance are not metrics but divergences, we introduce an accessibility relation based on a divergence as follows.

Suppose that an agent $a$ observes some data through a single measurement variable $x$.
Then the distribution of the observed data at a world $w$ is represented by $\sigma_{w}(x)$. 
Assume that the agent $a$ distinguishes distributions in terms of a divergence $\diverge{\cdot\!}{\!\cdot}: \Dists\calo \times \Dists\calo \rightarrow \realsnng$.
Then given a threshold $\varepsilon \ge 0$, we define a \emph{divergence-based accessibility relation} $\Raeps$ by:
\[
\Raeps \eqdef
\left\{ (w, w') \in \calw\times\calw \mid D(\sigma_{w}(x) \parallel \sigma_{w'}(x)) \le \varepsilon
\right\}
{.}
\]
For a smaller value of $\varepsilon$, the capability of distinguishing worlds is stronger.

If $D$ is a metric instead,
we call $\Raeps$ a \emph{metric-based accessibility relation}.
We often omit $a$ to write $\Reps$ when we do not compare different agents' knowledge.

Intuitively, $(w, w')\in\Raeps$ represents that the distribution of the data observed in $w$ is indistinguishable from that in $w'$ in terms of $D$.
By the definition of a divergence/metric $D$,  $D(\sigma_{w}(x) \parallel \sigma_{w'}(x)) = 0$ implies $\sigma_{w}(x) = \sigma_{w'}(x)$.
Therefore, the relation $\calr_{a,0}$ expresses that the agent $a$ has an unlimited capability of observing the distributions $\sigma_{w}(x)$ and $\sigma_{w'}(x)$.
In Sections~\ref{sec:formal:HT} and~\ref{sec:formal:DP} we will show examples of divergence-based accessibility relations.

\subsection{Stochastic Semantics}
\label{sub:interpretation}

In this section we define the \emph{stochastic semantics} for the \StatEL{} formulas over a distributional Kripke model 
$\M =(\calw, (\calr_a)_{a\in\cala}, (V_s)_{s\in\cals})$ 
with $\calw = \Dists\cals$.

The interpretation of static formulas $\psi$ at a state $s$ is given by:
\begin{align*}
s \models \gamma(x_1, x_2, \ldots, x_k) 
& ~\mbox{ iff }~
(\sigma_s(x_1), \sigma_s(x_2), \ldots, \sigma_s(x_k)) \in V_s(\gamma)
\\
s \models \neg \psi 
& ~\mbox{ iff }~
s \not\models \psi 
\\
s \models \psi \wedge \psi'
& ~\mbox{ iff }~
s \models \psi
~\mbox{ and }~
s \models \psi'
{.}
\end{align*}
Note that the satisfaction of the static formulas does not involve probability.

To interpret the non-classical implication $\erightarrow$, we define the \emph{restriction} $w|_\psi$ of a world $w$ to a state formula $\psi$ as follows.
If there exists a state $s$ such that $w[s] > 0$ and $s \models \psi$, then
$w|_\psi$ can be defined as the distribution over the finite set $\cals$ of states such that:
\begin{align*}
w|_\psi[s] = 
\begin{cases}
\frac{w[s]}{\sum_{s': s' \models \psi} w[s']}
& \mbox{ if $s \models \psi$}
\\
0 & \mbox{ otherwise.}
\end{cases}
\end{align*}
Then $\sum_{s} w|_\psi[s] = 1$.
Note that $w|_\psi$ is undefined if $w$ does not have a state $s$ that satisfies $\psi$ and has a non-zero probability in $w$.

Now we define the interpretation of epistemic formulas at a world $w$ in $\M$~by:
\begin{align*}
\M, w \models \PR{I} \psi
& ~\mbox{ iff }~
\Pr\!\left[ s \randassign w :~ s \models \psi \right] \in I
\\
\M, w \models \neg \phi
& ~\mbox{ iff }~
\M, w \not\models \phi
\\
\M, w \models \phi \wedge \phi'
& ~\mbox{ iff }~
\M, w \models \phi
~\mbox{ and }~
\M, w \models \phi'
\\
\M, w \models \psi \erightarrow \phi
& ~\mbox{ iff }~
\mbox{$w|_{\psi}$ is defined and }~
\M, w|_{\psi} \models \phi
\\
\M, w \models \MKa \phi
& ~\mbox{ iff }~
\mbox{for every $w'$ s.t. $(w, w') \in \calr_a$, }~
\M, w' \models \phi
{,}
\end{align*}
where $s \randassign w$ represents that a state $s$ is sampled from the distribution $w$.

Finally, the interpretation of an epistemic formula $\phi$ in $\M$ is given by:
\begin{align*}
\M \models \phi
& ~\mbox{ iff }~
\mbox{for every world $w$ in $\M$, }~
\M, w \models \phi
{.}
\end{align*}
We remark that in each world $w$, measurement variables can be interpreted using $\sigma_w$,  as shown in Section~\ref{sub:Kripke}. 
This allows one to assign different values to distinct occurrences of a variable in a formula;
E.g., in $\phi(x) \rightarrow \MKa \phi'(x)$,\, the measurement variable $x$ occurring in $\phi(x)$ can be interpreted using $\sigma_{w}$ in a world $w$, while $x$ in $\phi'(x)$ can be interpreted using $\sigma_{w'}$ in another $w'$ s.t. $(w, w')\in \calr_a$.

Note that our semantics for probability quantification is different from that in the previous work.
Halpern~\cite{Halpern:90:AI} shows two approaches to defining semantics: giving probabilities (1) on the domain and (2) on possible worlds.
However, our semantics is different from both. 
It defines probabilities on the states belonging to a possible world, while each world is not assigned a probability.
Hence, unlike Halpern's approaches, our model can deal with both probabilistic behaviours of systems and non-deterministic inputs from an external environment.

We also remark that
\StatEL{} can be used to formalize conditional probabilities.
If the conditional probability of satisfying a static formula $\psi_1$ given another static formula $\psi_0$ is included in a set $I$ at a world $w$, then we have 
$\Pr\!\left[ s \randassign w|_{\psi_0}: s \models \psi_1 \right]\!\in~I$,
hence we obtain
$\M, w \models \psi_0 \erightarrow \PR{I} \psi_1$.

\section{Basic Properties of \StatEL{}}
\label{sec:properties}
In this section we present basic properties of \StatEL{}.
In particular, we show the transitivity and Euclidean axioms rely on the agent's capability of observation.

\subsection{Properties of Probability Quantification}
\label{sub:property:PIQ}

We can define a dual operator of $\PR{I}$ as follows.
Given a finite union $I \subseteq [0, 1]$ of disjoint intervals,
let $I^c \eqdef [0,1] \setminus I$ and $\ov{I}\,\, \eqdef \{ 1 - p \mid p \in I \}$.
Then $\ov{I^c} = \ov{I}^c$.
Negation with $\PR{I}$ has the following properties.

\begin{restatable}[Negation with probability quantification]{prop}{AxiomsNotPIQ}
\label{prop:not-PIQ}
For any world $w$ in a model $\M$ and any static formula $\psi$, we have:
\begin{enumerate}
\item 
$\M, w \models \neg \PR{I} \psi$ ~iff~ 
$\M, w \models \PR{I^c} \psi$
\item 
$\M, w \models \PR{I} \neg \psi$ ~iff~
$\M, w \models \PR{\ov{I}} \psi$.
\end{enumerate}
\end{restatable}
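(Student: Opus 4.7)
The plan is to prove each equivalence by directly unpacking the stochastic semantics of $\PR{I}$ and exploiting two elementary facts: that the probability $\Pr[s \randassign w : s \models \psi]$ always lies in $[0,1]$, and that static formulas admit pointwise negation at the state level, so the event $\{s : s \models \neg \psi\}$ is exactly the complement of $\{s : s \models \psi\}$.

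For part~(1), I would start from $\M, w \models \neg \PR{I} \psi$, apply the semantic clause for $\neg$ on epistemic formulas to obtain $\M, w \not\models \PR{I} \psi$, and then apply the clause for $\PR{I}$ to get $\Pr[s \randassign w : s \models \psi] \notin I$. Since this probability always lies in $[0,1]$, we have $\Pr[s \randassign w : s \models \psi] \in [0,1]\setminus I = I^c$, and re-applying the clause for $\PR{I^c}$ gives $\M, w \models \PR{I^c} \psi$. Each step is a biconditional, so the chain yields the full iff.

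For part~(2), I would argue as follows. By the clause for $\PR{I}$, $\M, w \models \PR{I} \neg \psi$ iff $\Pr[s \randassign w : s \models \neg \psi] \in I$. Using the state-level semantics $s \models \neg \psi$ iff $s \not\models \psi$, the event in question is the complement of $\{s : s \models \psi\}$ in $\supp(w)$, so its probability equals $1 - \Pr[s \randassign w : s \models \psi]$. Thus the condition becomes $1 - \Pr[s \randassign w : s \models \psi] \in I$, which is equivalent to $\Pr[s \randassign w : s \models \psi] \in \{1 - p \mid p \in I\} = \ov{I}$, i.e.\ $\M, w \models \PR{\ov{I}} \psi$.

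There is essentially no obstacle here: the proof is a routine translation between the semantic clauses and elementary set-theoretic manipulations of subsets of $[0,1]$. The only point requiring mild care is the use of the ambient constraint that probabilities live in $[0,1]$ when rewriting $\notin I$ as $\in I^c$ in part~(1); without this, $I^c$ would have to be interpreted in $\reals$ rather than in $[0,1]$. I would state this observation once at the start to keep the subsequent calculations clean.
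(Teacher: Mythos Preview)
Your proposal is correct and follows essentially the same route as the paper's own proof: both parts are handled by directly unfolding the semantic clauses for $\neg$ and $\PR{I}$ and performing the obvious set-theoretic rewriting in $[0,1]$. Your version is slightly more explicit (e.g., flagging the ambient $[0,1]$ constraint and the complement-event step), but there is no substantive difference in approach.
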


By Proposition~\ref{prop:not-PIQ}, $\neg \PR{I} \neg \psi$ is logically equivalent to $\PR{\ov{I^c}} \psi$.
For instance, $\neg \PR{(0,1]} \neg \psi$ is equivalent to $\PR{1} \psi$, and 
$\neg \PR{[0,1)} \neg \psi$ is equivalent to $\PR{0} \psi$.

\subsection{Properties of Epistemic Modality}
\label{sub:property:epistemic}
Next we show some properties of epistemic modality.
As with the standard modal logic, \StatEL{} satisfies the necessitation rule and distribution axiom.

\begin{restatable}[Minimal properties]{prop}{AxiomsMinimal}
\label{prop:axioms:basic}
For any distributional Kripke model $\M$, any $\phi, \phi_0, \phi_1\in\calf$, and any $a\in\cala$, we have:
\begin{itemize}
\item[]{\rm ({\bf N})} necessitation:~ 
$\M \models \phi$ implies $\M \models \MKa \phi$
\item[]{\rm ({\bf K})} distribution:~ 
$\M \models \MKa (\phi_0 \rightarrow \phi_1) \rightarrow  (\MKa \phi_0 \rightarrow \MKa \phi_1)$.
\end{itemize}
\end{restatable}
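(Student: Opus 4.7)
The plan is to unfold the semantic definitions at an arbitrary world and reason directly, since both (N) and (K) are standard modal facts that hold whenever the accessibility relation is a plain binary relation on worlds, without needing any special property of $\calr_a$. The only subtlety is that ``$\M \models \phi$'' here quantifies over all worlds in the universe $\calw$, while the epistemic clause $\M, w \models \MKa \phi$ quantifies only over $\calr_a$-accessible worlds; this asymmetry is what makes (N) trivial and (K) essentially propositional.

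For (N), I would fix an arbitrary world $w \in \calw$ and consider any $w'$ with $(w, w') \in \calr_a$. The hypothesis $\M \models \phi$ gives $\M, w' \models \phi$ regardless of which world $w'$ is, since the universal quantification over $\calw$ subsumes the $\calr_a$-image of $w$. By the semantic clause for $\MKa$, this yields $\M, w \models \MKa \phi$, and since $w$ was arbitrary, $\M \models \MKa \phi$.

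For (K), I would fix a world $w$ and, recalling that $\phi_0 \rightarrow \phi_1$ abbreviates $\neg \phi_0 \vee \phi_1$, verify pointwise at each $\calr_a$-successor that the classical implication behaves as expected. Concretely, assuming $\M, w \models \MKa(\phi_0 \rightarrow \phi_1) \wedge \MKa \phi_0$, I pick any $w'$ with $(w, w') \in \calr_a$; the two conjuncts give $\M, w' \models \phi_0 \rightarrow \phi_1$ and $\M, w' \models \phi_0$, from which $\M, w' \models \phi_1$ follows by the Boolean clauses for $\neg$ and $\wedge$ at epistemic formulas. Hence $\M, w \models \MKa \phi_1$, which establishes the implication at $w$; universally quantifying over $w$ completes the argument.

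The main thing to verify carefully, rather than any genuine obstacle, is that the Boolean connectives in \StatEL{} behave classically at the epistemic level, which is immediate from the clauses for $\neg \phi$ and $\phi \wedge \phi'$ in the stochastic semantics. Neither the probability quantifier $\PR{I}$ nor the non-classical implication $\erightarrow$ enters the argument, since the inductive reasoning happens entirely at the level of $\MKa$ and the Boolean connectives above it; this is why no assumption on the divergence-based relation $\calr_a$ is needed, and why both (N) and (K) hold uniformly in the agent~$a$.
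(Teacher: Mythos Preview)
Your proposal is correct and follows essentially the same approach as the paper's own proof: unfolding the semantic clauses at an arbitrary world, passing to an arbitrary $\calr_a$-successor, and using the classical behaviour of the Boolean connectives at the epistemic level. Your added remarks about why no special property of $\calr_a$ is required and why $\PR{I}$ and $\erightarrow$ play no role are accurate and make the argument slightly more explicit than the paper's version.
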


The satisfaction of other properties depends on the definition of the accessibility relation.
Since many notions of statistical distance are not metrics but divergences, we present some basic properties when $\M$ has a divergence-based accessibility relation:
$\Raeps = 
\left\{ (w, w') \in \calw\times\calw \mid D(\sigma_{w}(x) \parallel \sigma_{w'}(x)) \le \varepsilon
\right\}$.

\begin{restatable}[Properties with divergence-based accessibility]{prop}{AxiomsDivergence}
\label{prop:axioms:divergence}
Let $a\in\cala$ and $\varepsilon \ge \varepsilon' \ge 0$.
For any distributional Kripke model $\M$ with a divergence-based accessibility relation $\Raeps$ and any $\phi\in\calf$, we have:
\begin{itemize}
\item[]{\rm ({\bf T})} reflexivity:~ 
$\M \models \MKae \phi \rightarrow \phi$
\item[]{\rm ($\bf{\ge}$)} comparison of observability:~ 
$\M \models {\mathop{\mathsf{K}_{a,\varepsilon}}} \phi \rightarrow {\mathop{\mathsf{K}_{a,\varepsilon'}}} \phi$.
\end{itemize}
If $\Raeps$ is symmetric (e.g., based on the Jensen-Shannon divergence~\cite{Lin:91:TIT}) then:
\begin{itemize}
\item[]{\rm ({\bf B})} symmetry:~ 
$\M \models \phi \rightarrow \MKae \MPae \phi$.
\end{itemize}
\end{restatable}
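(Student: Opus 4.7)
The plan is to reduce each of the three axioms to a direct property of the accessibility relation $\Raeps = \{(w,w')\in\calw\times\calw \mid D(\sigma_w(x)\parallel\sigma_{w'}(x))\le\varepsilon\}$, and then invoke the standard correspondence between frame conditions and modal axioms.

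For (\textbf{T}), I would first observe that $\Raeps$ is reflexive: property (ii) of a divergence gives $D(\sigma_w(x)\parallel\sigma_w(x))=0$, and since $\varepsilon\ge 0$ this forces $(w,w)\in\Raeps$ for every world $w$. Then, assuming $\M,w\models\MKae\phi$, the semantics of $\MKa$ applied to the accessible world $w'=w$ immediately yields $\M,w\models\phi$. For ($\bf{\ge}$), I would note that $\varepsilon'\le\varepsilon$ implies the inclusion $\calr_{a,\varepsilon'}\subseteq\Raeps$: any pair $(w,w')$ with $D(\sigma_w(x)\parallel\sigma_{w'}(x))\le\varepsilon'$ also satisfies the weaker bound $\le\varepsilon$. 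Hence every world accessible from $w$ via $\calr_{a,\varepsilon'}$ is already covered by $\MKae\phi$, so $\M,w\models\MKae\phi$ entails $\M,w\models\mathop{\mathsf{K}_{a,\varepsilon'}}\phi$.

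For (\textbf{B}), I would unfold $\MPae\phi\equiv\neg\MKae\neg\phi$ to get the usual diamond semantics: $\M,w'\models\MPae\phi$ iff there is some $w''$ with $(w',w'')\in\Raeps$ and $\M,w''\models\phi$. Fix a world $w$ with $\M,w\models\phi$ and any $w'$ with $(w,w')\in\Raeps$; by the assumed symmetry of $\Raeps$ we have $(w',w)\in\Raeps$, so taking $w'':=w$ witnesses $\M,w'\models\MPae\phi$. Since $w'$ was arbitrary, $\M,w\models\MKae\MPae\phi$, establishing (\textbf{B}). I would also remark, as justification for the hypothesis, that the Jensen--Shannon divergence is symmetric (so that $\Raeps$ is symmetric in that case), unlike $D_{\chi^2}$ or $\Dinf$ recalled in Section~\ref{subsec:divergence}.

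There is no deep obstacle here: each axiom is the textbook correspondent of a frame condition (reflexivity for \textbf{T}, inclusion of relations for $\bf{\ge}$, symmetry for \textbf{B}), and the only work is checking these frame conditions from the definition of $\Raeps$ together with the defining properties of a divergence. The mildest subtlety lies in (\textbf{B}), where one must carefully unfold the dual $\MPae$ before using symmetry to recycle the starting world $w$ as the witness of possibility.
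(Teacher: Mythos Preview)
Your proposal is correct and follows essentially the same approach as the paper's proof: both reduce each axiom to the corresponding frame condition on $\Raeps$ (reflexivity from $D(\mu\parallel\mu)=0$, the inclusion $\calr_{a,\varepsilon'}\subseteq\Raeps$ from $\varepsilon'\le\varepsilon$, and symmetry by hypothesis) and then apply the standard semantic unfolding of $\MKae$ and $\MPae$. If anything, your write-up is slightly more explicit in invoking divergence property~(ii) for reflexivity and in spelling out the diamond semantics for~({\bf B}).
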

Here the axiom $({\bf{\ge}})$ represents that 
an agent having a stronger capability of distinguishing worlds may have more beliefs.

Finally, we show some properties when $\Raeps$ is based on a metric (e.g. the $p$-Wasserstein metric~\cite{Vaserstein:69:PPI}, including the Earth mover's distance).

\begin{restatable}[Properties with metric-based accessibility]{prop}{AxiomsMetric}
\label{prop:axioms:metric}
Let $a\in\cala$ and $\varepsilon, \varepsilon' \ge 0$.
For any distributional Kripke model $\M$ with a metric-based accessibility relation $\Raeps$ and any $\phi\in\calf$,
we have {\rm ({\bf T})}reflexivity, {\rm ({\bf B})}symmetry,~and:
\begin{itemize}
\item[]{\rm ({\bf 4q})} quantitative transitivity:~ 
$\M \models {\mathop{\mathsf{K}_{a,\varepsilon+\varepsilon'}}} \phi \rightarrow \MKae {\mathop{\mathsf{K}_{a,\varepsilon'}}} \phi$
\item[]{\rm ({\bf 5q})} relaxed Euclidean:~ 
$\M \models {\mathop{\mathsf{P}_{a,\varepsilon}}} \phi \rightarrow {\mathop{\mathsf{K}_{a,\varepsilon'}}} {\mathop{\mathsf{P}_{a,\varepsilon+\varepsilon'}}} \phi$.
\end{itemize}
If the agent has an unlimited capability of observation (i.e., $\varepsilon = \varepsilon' = 0$), then:
\begin{itemize}
\item[]{\rm ({\bf 4})} transitivity:~ 
$\M \models \MKaz \phi \rightarrow \MKaz \MKaz \phi$
\item[]{\rm ({\bf 5})} Euclidean:~ 
$\M \models \MPaz \phi \rightarrow \MKaz \MPaz \phi$.
\end{itemize}
\end{restatable}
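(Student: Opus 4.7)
The plan is to read off each modal axiom from the corresponding property of the metric $d$ that induces $\Raeps$, using the standard correspondence between frame conditions and modal schemata. Throughout, write $d_{w,w'} := d(\sigma_w(x), \sigma_{w'}(x))$ for brevity.

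\textbf{Reflexivity (T) and symmetry (B).} Since $d$ is a metric, $d_{w,w} = 0 \le \varepsilon$, so $(w,w) \in \Raeps$; hence $\Raeps$ is reflexive, which is well-known to validate (T) via the semantic clause for $\MKa$. Likewise, $d_{w,w'} = d_{w',w}$ gives $(w,w') \in \Raeps \Leftrightarrow (w',w) \in \Raeps$, and symmetric accessibility validates (B). These two items overlap with Proposition~\ref{prop:axioms:divergence}, and their proofs carry over verbatim.

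\textbf{Quantitative transitivity (4q).} The key observation is the ``additive composition'' property: if $(w,w') \in \calr_{a,\varepsilon}$ and $(w',w'') \in \calr_{a,\varepsilon'}$, then by the triangle inequality
\begin{equation*}
d_{w,w''} \le d_{w,w'} + d_{w',w''} \le \varepsilon + \varepsilon',
\end{equation*}
so $(w,w'') \in \calr_{a,\varepsilon+\varepsilon'}$. Given $\M, w \models {\mathop{\mathsf{K}_{a,\varepsilon+\varepsilon'}}}\phi$, to establish $\MKae {\mathop{\mathsf{K}_{a,\varepsilon'}}}\phi$ I pick arbitrary $w' \in \Raeps(w)$ and $w'' \in \calr_{a,\varepsilon'}(w')$ and apply the observation to conclude $w'' \in \calr_{a,\varepsilon+\varepsilon'}(w)$, so $\M, w'' \models \phi$.

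\textbf{Relaxed Euclidean (5q).} Suppose $\M, w \models {\mathop{\mathsf{P}_{a,\varepsilon}}}\phi$, witnessed by some $w'$ with $(w,w') \in \calr_{a,\varepsilon}$ and $\M, w' \models \phi$. To establish $\MKae'{\mathop{\mathsf{P}_{a,\varepsilon+\varepsilon'}}}\phi$, pick an arbitrary $w'' \in \calr_{a,\varepsilon'}(w)$; I claim $w'$ itself witnesses ${\mathop{\mathsf{P}_{a,\varepsilon+\varepsilon'}}}\phi$ at $w''$. Indeed, symmetry gives $(w'',w) \in \calr_{a,\varepsilon'}$, and then the triangle inequality yields
\begin{equation*}
d_{w'',w'} \le d_{w'',w} + d_{w,w'} \le \varepsilon' + \varepsilon,
\end{equation*}
so $(w'', w') \in \calr_{a,\varepsilon+\varepsilon'}$, as required. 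Note that both symmetry and subadditivity of $d$ are used here, explaining why (5q) is stated only in the metric setting.

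\textbf{(4) and (5).} These are the specializations of (4q) and (5q) to $\varepsilon = \varepsilon' = 0$, which recover the classical transitivity and Euclidean frame conditions and hence the classical axioms. There is no real obstacle in the proof; the only point to check carefully is that (5q) genuinely needs the symmetry axiom of $d$, so a pure divergence (which may fail symmetry or the triangle inequality) would not suffice — this is precisely why Proposition~\ref{prop:axioms:divergence} only stated the weaker (B) under the extra symmetry hypothesis and did not list (4q) or (5q) at all.
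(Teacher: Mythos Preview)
Your proof is correct and follows essentially the same approach as the paper: you derive (T) and (B) from the reflexivity and symmetry of the metric (the paper does this by reducing to Proposition~\ref{prop:axioms:divergence}), prove (4q) via the triangle inequality applied to a chain $w \to w' \to w''$, prove (5q) by combining symmetry with the triangle inequality along $w'' \to w \to w'$, and obtain (4) and (5) as the $\varepsilon=\varepsilon'=0$ specializations. The structure and the key steps match the paper's proof exactly.
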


By this proposition, for $\varepsilon = 0$, \StatEL{} has the axioms of {\bf S5}, hence 
the epistemic operator $\MKaz$ represents knowledge rather than beleif.

However, if the agent has a limited observability (i.e., $\varepsilon>0$), then neither transitivity nor Euclidean may hold.
This means that, even when he know whether $\phi$ holds or not with some confidence, he may not be perfectly confident that he knows it.

\section{Modeling Statistical Hypothesis Testing Using \StatEL{}}
\label{sec:formal:HT}
In this section we formalize statistical hypothesis testing by using \StatEL{} formulas, and introduce a notion of statistical secrecy with a confidence level.

\subsection{Statistical Hypothesis Testing}
\label{sub:HT}
A \emph{statistical hypothesis testing} is a method of statistical inference to check whether given datasets provide sufficient evidence to support some hypothesis.
Typically, given two datasets, a \emph{null hypothesis} $H_0$ is defined to claim that there is no statistical relationship between the two datasets (e.g., no difference between the result of a medical treatment and the placebo effect), while an \emph{alternative hypothesis} $H_1$ represents that there is some relationship between them (e.g., the result of a medical treatment is better than the placebo effect).

Before performing a hypothesis test, we specify a \emph{significance level} \journal{(\emph{type I error rate})} $\alpha$, i.e., the probability that the test might reject the null hypothesis $H_0$, given that $H_0$ is true.
Typically, $\alpha$ is $0.05$ or $0.01$.
$1 - \alpha$ is called a \emph{confidence level}.
\journal{Symmetrically, a \emph{type II error rate} $\beta$ is the probability that the test fails to reject $H_0$, given that $H_0$ is false.
$1 - \beta$ is called a \emph{power} of the test.}

\subsection{Formalization of Statistical Hypothesis Testing}
\label{sub:reasoning:HT}

Now we define a distributional Kripke model $\M$ with a universe $\calw$ that includes at least two worlds $\wre$ and $\wid$ corresponding to the two datasets we compare: 
\begin{itemize}
\item the real world $\wre$ where we have a dataset sampled from actual experiments (e.g., from a medical treatment whose effectiveness we want to know);
\item the ideal world $\wid$ where we have a dataset that is synthesized from the null hypothesis setting (e.g., the dataset obtained from the placebo effect).
\end{itemize}
Note that $\calw$ may include other worlds corresponding to different possible datasets.

Let $\samp$ be the size of the dataset, and
$x$ be a measurement variable denoting a single data value chosen from the dataset we have.
We assume that each world $w$ has a state $s$ corresponding to each single data value $\sigma_{s}(x)$ in the dataset.
Then $\sigma_{\wre}(x)$ is the empirical distribution (histogram) calculated from the dataset observed in the actual experiments in $\wre$, while $\sigma_{\wid}(x)$ is the distribution calculated from the synthetic dataset in $\wid$.
Then the number of data having a value $v$ in the dataset in a world $w$ is given by $\samp \cdot \sigma_{w}(x)[v]$.

Assume that $\M$ has an accessibility relation $\calr_{c_{\alpha}/\samp}$ that is specific to the sample size $\samp$, the statistical hypothesis test, and the \emph{critical value} $c_{\alpha}$ for a significance level $\alpha$ we use.
For brevity let $\epsan = c_{\alpha} / \samp$.
Intuitively, $(\wre, \wid) \in \Rea$ represents that the hypothesis test cannot distinguish  the actual dataset from the synthetic one.
For instance, when we use Pearson's $\chi^2$-test as the hypothesis test, then $\Rea$ is defined by: 
\[
\Rea \eqdef
\left\{ (w, w') \in \calw\times\calw \mid 
D_{\chi^2}(\sigma_{w}(x) \parallel \sigma_{w'}(x)) \le \epsan
\right\}\!,
\]
where $D_{\chi^2}$ is Pearson's $\chi^2$ divergence (Definition~\ref{def:chi-divergence}).

Observe that 
when the confidence level $1- \alpha$ increases, then $c_{\alpha}$ decreases, hence $\epsan = c_{\alpha} / \samp$ is smaller, i.e., the capability of distinguishing possible worlds is stronger.

Let $\phisyn$ be a formula representing that the dataset is synthesized from the null hypothesis setting (e.g., representing the placebo effect).
Then $\M, \wid \models \phisyn$.
Since each world in $\calw$ corresponds to a different dataset, it holds for any $w'\neq \wid$ that $\M, w' \models \neg \phisyn$.
For instance, $\M, \wre \models \neg \phisyn$, since the actual dataset is used in $\wre$ even when it looks indistinguishable from the synthetic dataset by the hypothesis test.

When the null hypothesis is rejected with a confidence level $1-\alpha$, then $(\wre, \allowbreak \wid) \not\in \Rea$.
Since $\M, w' \models \neg \phisyn$ holds for any $w'\neq \wid$, 
this rejection of the null hypothesis implies:
\[
\M, \wre \models \MKea \neg \phisyn
{,}
\]
which is logically equivalent to $\M, \wre \models \neg \MPea \phisyn$.
This means that with the confidence level $1-\alpha$, we know we are not located in the world $\wid$, hence do not have a synthetic dataset.

On the other hand, when the null hypothesis is not rejected with a confidence level $1-\alpha$, then $(\wre, \wid) \in \Rea$.
Thus we obtain:
\begin{align}\label{eq:null}
\M, \wre \models \MPea \phisyn
{.}
\end{align}
This means that we cannot recognize whether we are located in the world $\wre$ or $\wid$, i.e.,
we are not sure which database we have.
To see this in details, let $\phi'$ be a formula representing that we have a third database (different from those in $\wre$ and $\wid$).
Suppose that another null hypothesis of satisfying $\phi'$ is not rejected with a confidence level $1-\alpha$.
Then we have $\M, \wre \models \MPea \phi'$.
Since each world in $\calw$ corresponds to a different database, we obtain
$\M, \wre \models \MPea \neg \phisyn$, 
which implies 
$\M, \wre \models \neg \MKea \phisyn$.
This represents that, when the null hypothesis is not rejected, 
we are not sure whether the null hypothesis is true or false.

\subsection{Formalization of Statistical Secrecy}
\label{sub:secrecy}
Now let us formalize the coin flipping in Example~\ref{eg:coin} in Section~\ref{sec:running-example} by using \StatEL{} as follows.
Recall that $p(heads) = 0.5$ in $w_0$ and $p(heads) = 0.4$ in $w_1$.
Let $\psi$ be a static formula representing that the coin is a heads.
Then $\M, w_0 \models \PR{0.5} \psi$ and $\M, w_1 \models \PR{0.4} \psi$.
Assume that either $p(heads) = 0.5$ or $p(heads) = 0.4$ holds, i.e., $\M \models \PR{0.5} \psi \vee \PR{0.4} \psi$.

When we have a sufficient number $\samp$ of coin flips (e.g., $\samp=500$), we can distinguish $p(heads) = 0.5$ from $p(heads) = 0.4$ (i.e., $w_0$ from $w_1$) by a hypothesis test.
Hence we learn the probability $p(heads)$ with some confidence level $1 - \alpha$, i.e., 
$\M, w_0 \models \MKea \PR{0.5} \psi$
and
$\M, w_1 \models \MKea \PR{0.4} \psi$.
Therefore we obtain:
\[
\M \models \bigl( \PR{0.5} \psi \rightarrow \MKea \PR{0.5} \psi \bigr) \wedge
\bigl( \PR{0.4} \psi \rightarrow \MKea \PR{0.4} \psi \bigr)
{.}
\]
Note that for a larger sample size $\samp' > \samp$,\, we have $\epsanp = c_{\alpha} / \samp' < c_{\alpha} / \samp = \epsan$, 
hence it follows from the axiom $({\bf{\ge}})$ in Proposition~\ref{prop:axioms:divergence} that: 
\[
\M \models \bigl( \PR{0.5} \psi \rightarrow \MKeap \PR{0.5} \psi \bigr) \wedge
\bigl( \PR{0.4} \psi \rightarrow \MKeap \PR{0.4} \psi \bigr)
{.}
\]
This means that if our knowledge derived from a smaller sample is statistically significant, then we derive the same conclusion from a larger sample.

On the other hand, when we have a very small number $\samp''$ of coin flips, we cannot distinguish $w_0$ from $w_1$.
Then we are not sure about $p(heads)$ with a confidence level $1-\alpha$, i.e., 
$\M, w_0 \models \MPeapp \PR{0.5} \psi$ and
$\M, w_1 \models \MPeapp \PR{0.4} \psi$.
Hence:
\[ 
\M \models (\PR{0.5} \psi \vee \PR{0.4} \psi ) \rightarrow (\MPeapp \PR{0.5} \psi \wedge \MPeapp \PR{0.4} \psi)
{.}
\]
This expresses a secrecy of $p(heads)$.
We generalize this to introduce the following definition of secrecy.
\begin{definition}[$(\alpha,\samp)$-statistical secrecy]\label{def:secrecy}\rm
Let $\Phi$ be a finite set of formulas,
$\alpha \in [0, 1]$ be a significance level,
and $\samp$ be a sample size.
We say that $\Phi$ is \emph{$(\alpha,\samp)$-statistically secret} if we have:
\begin{align*}
\M \models \bigvee_{\phi\in\Phi} \phi \rightarrow \bigwedge_{\phi \in \Phi} \MPea \phi
{.}
\end{align*}
\end{definition}

In the above coin flipping example, $\{ \PR{0.5} \psi,\, \PR{0.4} \psi \}$ is $(\alpha,\samp)$-statistically secret for some significance level $\alpha$ and sample size $\samp$.
Syntactically, $(\alpha,\samp)$-statistical secrecy resembles the notion of \emph{total anonymity}~\cite{Halpern:03:CSFW}, whereas in our definition, the epistemic operator $\MPea$ deals with the statistical significance and $\phi$ is not limited to a formula representing an agent's action.

\section{Modeling Statistical Data Privacy Using \StatEL{}}
\label{sec:formal:DP}
In this section we formalize a notion of statistical data privacy
by using \StatEL{}.

\subsection{Differential Privacy}
\label{sub:DP}

\emph{Differential privacy}~\cite{Dwork:06:ICALP,dwork2014algorithmic} is a popular measure of data privacy guaranteeing that 
by observing a statistics about a database $d$,
we cannot learn whether an individual user's record is included in $d$ or not.

As a toy example, let us assume that the body weight of individuals is sensitive information, and we publish the average weight of all users recorded in a database~$d$.
Then 
we denote by $d'$ the database obtained by adding to $d$ a single record of a new user $u$'s weight.
If we also disclose the average weight of all users in $d'$, then you learn $u$'s weight from the difference between these two averages.

To mitigate such privacy leaks, many studies have proposed \emph{obfuscation mechanisms}, i.e., randomized algorithms that add random noise to the statistics calculated from databases.
In the above example, an obfuscation mechanism receives a database $d$ and outputs a statistics of average weight to which some random noise is added.
Then you cannot learn much information on $u$'s weight from the perturbed statistics of average weight.

The privacy achieved by such obfuscation is often formalized as differential privacy.
Intuitively, an $\varepsilon$-differential privacy mechanism makes every two ``adjacent'' (i.e., close) database $d$ and $d'$ indistinguishable with a degree of $\varepsilon$.

\begin{definition}[Differential privacy] \label{def:DP} \rm
Let $e$ be the base of natural logarithm,
$\varepsilon\ge 0$, 
$\cald$ be the set of all databases, and $\Psi\subseteq\cald\times\cald$ be an adjacency relation between two databases.
A randomized algorithm $\alg: \cald \rightarrow \Dists\calo$ provides 
\emph{$\varepsilon$-differential privacy} 
w.r.t. $\Psi$ if for any $(d, d')\in\Psi$ and any $R\subseteq\calo$,
\[
\Prob[ \alg(d)\in R ] \leq e^{\varepsilon} \,\Prob[ \alg(d')\in R ]
\]
where the probability is taken over the randomness in $\alg$.
\end{definition}

For a smaller $\varepsilon$, 
the protection of differential privacy is stronger.
It is known that differential privacy can be defined using 
the max-divergence $\Dinf$ 
(Definition~\ref{def:max-divergence}) as follows~\cite{dwork2014algorithmic}.

\begin{restatable}{prop}{DP}
\label{prop:DP}
An obfuscation mechanism $\alg: \cald \rightarrow \Dists\calo$ provides 
$\varepsilon$-differential privacy 
w.r.t. $\Psi\subseteq\cald\times\cald$ iff for any $(d, d')\in\Psi$,\,
$\maxdiverge{\alg(d)}{\alg(d')} \le \varepsilon$ and
$\maxdiverge{\alg(d')}{\alg(d)} \le~\varepsilon$.
\end{restatable}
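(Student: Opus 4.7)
The plan is to establish the equivalence by direct translation between the two inequalities, exploiting that both definitions characterize the worst-case multiplicative gap between $\alg(d)[R]$ and $\alg(d')[R]$ over subsets $R$ of the output space $\calo$.

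For the ($\Rightarrow$) direction, I would assume $\varepsilon$-differential privacy and fix an arbitrary $(d,d')\in\Psi$. For any $R\subseteq\supp(\alg(d))$ with $\alg(d)[R] > 0$, the DP inequality gives $\alg(d)[R]/\alg(d')[R] \le e^\varepsilon$ (noting that $\alg(d')[R] > 0$, since otherwise DP would force $\alg(d)[R] = 0$, contradicting our choice of $R$). Taking the logarithm and then the maximum over such $R$ directly yields $\maxdiverge{\alg(d)}{\alg(d')} \le \varepsilon$. The symmetric bound $\maxdiverge{\alg(d')}{\alg(d)} \le \varepsilon$ then follows by applying the same argument to $(d',d)\in\Psi$, using that the adjacency relation $\Psi$ is symmetric (as is standard in differential privacy).

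For the ($\Leftarrow$) direction, assume both max-divergence bounds. I would derive the DP inequality for an arbitrary $R\subseteq\calo$ by reducing to the support-restricted subset $R' = R \cap \supp(\alg(d))$. Two observations drive the argument: $\alg(d)[R] = \alg(d)[R']$, since the contributions from $R \setminus \supp(\alg(d))$ vanish, and $\alg(d')[R] \ge \alg(d')[R']$, since $R' \subseteq R$. If $R' = \emptyset$ then $\alg(d)[R] = 0$ and the DP inequality is trivial. Otherwise, applying $\maxdiverge{\alg(d)}{\alg(d')} \le \varepsilon$ to $R'$ gives $\alg(d)[R'] \le e^\varepsilon \alg(d')[R']$, and chaining the two observations yields $\alg(d)[R] = \alg(d)[R'] \le e^\varepsilon \alg(d')[R'] \le e^\varepsilon \alg(d')[R]$, as required.

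The only mildly subtle point is the reduction to the support in the ($\Leftarrow$) direction, since $\maxdiverge{\cdot}{\cdot}$ is defined as a maximum over subsets of $\supp(\mu)$ rather than of $\calo$. Handling this carefully, and treating the edge case where $\alg(d')[R] = 0$, is the main (but minor) technical obstacle; everything else is routine.
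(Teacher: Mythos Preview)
The paper does not actually supply its own proof of this proposition: it is stated as a known fact and attributed to the Dwork--Roth monograph, so there is no in-paper argument to compare against. Your proof is the standard one and is correct; the reduction to $R' = R \cap \supp(\alg(d))$ in the $(\Leftarrow)$ direction is exactly the right way to bridge the support-restricted maximum in the definition of $\Dinf$ and the unrestricted $R \subseteq \calo$ in Definition~\ref{def:DP}.

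One point worth flagging explicitly (you already do, but it deserves emphasis): the paper's Definition~\ref{def:DP} is one-sided and does \emph{not} assume $\Psi$ is symmetric, yet Proposition~\ref{prop:DP} asserts both $\maxdiverge{\alg(d)}{\alg(d')} \le \varepsilon$ and $\maxdiverge{\alg(d')}{\alg(d)} \le \varepsilon$. Without symmetry of $\Psi$, the $(\Rightarrow)$ direction fails for the second bound: if $(d,d') \in \Psi$ but $(d',d) \notin \Psi$, the one-sided DP inequality places no constraint on $\alg(d')[R]/\alg(d)[R]$. Your appeal to the standard convention that adjacency is symmetric is therefore not a convenience but a necessary hypothesis, and strictly speaking it is an unstated assumption in the paper's formulation.
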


\subsection{Formalization of Differential Privacy}
\label{sub:formalize-DP}

Next we define a distributional Kripke model 
$\M =(\calw, \Reps, (V_s)_{s\in\cals})$
where there is a possible world corresponding to each database in $\cald$.
We assume that each world is a probability distribution of states in each of which an obfuscation mechanism $\alg$ uses a different value of random seed for providing a probabilistically perturbed output.
Let $x$ (resp. $y$) be a measurement variable denoting the input (resp. output) of the obfuscation mechanism $\alg$.
In each world $w$,\, $\sigma_{w}(x)$ is the database that $\alg$ receives as input, and $\sigma_{w}(y)$ is the distribution of statistics that $\alg$ outputs.
Then the set of all databases is denoted by $\cald = \{ \sigma_{w}(x) \mid w\in\calw \}$.

Now we define the accessibility relation $\Reps$
in $\M$ by using 
the max divergence $\Dinf$ 
as follows\footnote{Since the relation $\Reps$
is symmetric, the symmetry axiom {\rm ({\bf B})} also holds. }:
\[
\Reps
\eqdef 
\left\{ (w, w') \in \calw\times\calw \mid 
\maxdiverge{\sigma_{w}(y)\!}{\!\sigma_{w'}(y)} \le \varepsilon
\mbox{, }
\maxdiverge{\sigma_{w'}(y)\!}{\!\sigma_{w}(y)} \le \varepsilon
\right\}
.
\]
Intuitively, 
$(w, w') \in \Reps$ 
represents that, when we observe an output $y$ of the obfuscation mechanism $\alg$, we do not know which of the two worlds $w$ and $w'$ we are located at.
Hence we do not see which of the two databases $\sigma_{w}(x)$ and $\sigma_{w'}(x)$ was the input to $\alg$.

For each $d\in\cald$, let $\phi_d$ be a formula representing that we have a database $d$.
Then the 
$\varepsilon$-differential privacy 
of $\alg$ w.r.t. an adjacency relation $\Psi$ is expressed~as:
\[
\M \models 
\bigwedge_{d\in \cald} 
\Bigl( \phi_d \rightarrow
\bigwedge_{d'\in \Psi(d)} 
\MPe \phi_{d'}
\Bigr)
{.}
\]

Note that the privacy of user attributes defined as distribution privacy~\cite{Kawamoto:19:ESORICS} can also be expressed using \StatEL{}, since it is defined as the differential privacy w.r.t. a relation between the probability distributions that represent user attributes.
We will elaborate on this in future work.

\section{Related Work}
\label{sec:related}
In this section, we overview related work, including the integration of logical and statistical techniques, epistemic logic, and logical formalization of privacy.

\paragraph{Integration of logical and statistical techniques.}
There have been various studies on integrating logical and statistical techniques in software engineering.
Notable examples are \emph{probabilistic programming}~\cite{Gordon:14:FOSE}, which has sampling from distributions and conditioning by observations, and 
\emph{statistical model checking}~\cite{Sen:04:CAV,Younes:05:PhD,Legay:10:RV}, which checks the satisfiability of logical formulas by simulations and statistical hypothesis tests.
In research of privacy, a few papers (e.g.,~\cite{Biondi:18:FAOC}) present hybrid methods combining symbolic and statistical analyses to quantify privacy leaks.
In future work, our logic may be used to define specifications of these techniques and characterize their properties.

\paragraph{Non-determinism and probability in Kripke models.}
Although many epistemic models have been proposed~\cite{Fagin:95:book,Halpern:03:book,Halpern:03:CSFW},
they often assume that each possible world is a single deterministic state.
To formalize the behaviours of stochastic systems in their model, they assume that every world is assigned a probability (e.g.,~\cite{Kooi:03:JOLLI}), which means the non-determinism needs to be resolved in advance.

However, not only probability but also non-deterministic inputs are essential to reason about security and many applications in statistics.
In the context of security, we usually do not have a prior knowledge of the probability distribution of adversarial inputs.
Also in the statistical hypothesis testing (Section~\ref{sub:HT}), we do not assume the prior probabilities of the null/alternative hypotheses.
The notion of differential privacy (Definition~\ref{def:DP}) is also independent of the prior distribution on the databases.
Therefore, unlike ours, the Kripke models in previous work cannot be used for the purpose of formalizing such statistical knowledge.

\paragraph{Kripke model for some aspects of statistics.}
The \emph{random worlds model}~\cite{Halpern:03:book} is an epistemic model that tries to formalize some aspects of statistics.
In that model, they assume that each possible world has an identical probability at the initial time, although this causes problems as mentioned in Chapter 10 of~\cite{Halpern:03:book}.
Unlike our distributional model, their model employs neither distributions of states nor statistical significance.
They assume only finite intervals of errors, and analyze only the ideal situation that corresponds to an infinite sample size.
Therefore, the random worlds model cannot formalize statistical knowledge in our sense.

In research of philosophical logic, \cite{Lewis:80:subjectivist,Bana:17:EPSP} formalize the idea that when a random value has various possible probability distributions, those distributions should be represented on different possible worlds.
Unlike our work, however, they do not model statistical significance or explore accessibility relations.

Independently of our work, French et al.~\cite{French:19:AAAMAS} propose a probability model for a dynamic epistemic logic where each world is associated with a (subjective) probability distribution \emph{over the universe} and may have a different probability for a propositional variable to be true.
This is different from our distributional Kripke model in that their model does not associate each world with a probability distribution of observable variables, hence deals with neither non-deterministic inputs, divergence-based accessibility relations, nor statistical significance.

\paragraph{Epistemic logic for privacy properties.}
Epistemic logic has been used to formalize and reason about privacy properties, including anonymity~\cite{Syverson:99:FM,Halpern:03:CSFW,Meyden:04:CSFW,Garcia:05:FMSE,Kawamoto:07:JSIAM,Eijck:07:ENTCS,Baskar:07:TARK,Chadha:09:Forte}, role-interchangeability~\cite{Mano:10:JLC}, receipt-freeness of electronic voting protocols~\cite{Jonker:06:WOTE,Baskar:07:TARK}, and its extension called coercion-resistance~\cite{Kuesters:09:SP}.
Unlike our formalization in Section~\ref{sec:formal:DP}, however, these do not regard possible worlds as probability distributions and cannot formalize privacy properties with a statistical significance.

\paragraph{Logical approaches to differential privacy.}
There have been studies that formalize differential privacy using logics, such as Hoare logic~\cite{Barthe:14:CSF} and \mbox{HyperPCTL}~\cite{Abraham:18:QEST}.
Compared to \StatEL{}, these formalizations need to explicitly describe inequalities of probabilities without much abstraction, hence the formulas are more complicated.
In addition, none of them formalizes the situation with finite sample sizes or statistical significance.

\section{Conclusion}
\label{sec:conclude}
We introduced statistical epistemic logic (\StatEL{}) to describe statistical knowledge,
and showed its stochastic semantics based on the distributional Kripke model.
By using \StatEL{} we introduced $(\alpha,\samp)$-statistical secrecy with a significance level $\alpha$ and a sample size $\samp$, and showed that \StatEL{} is useful to formalize hypothesis testing and differential privacy in a simple way.
As shown in~\cite{Kawamoto:19:SEFM}, \StatEL{} can also express certain properties of statistical machine learning.

In our ongoing work, we extend \StatEL{} to deal with the security of cryptography based on computational complexity theory.
As for future work, we will extend this logic with temporal modality and give its axiomatization.
Our future work includes an extension of \StatEL{} to formalize the quantitative notions of anonymity~\cite{Chatzikokolakis:08:IC} and asymptotic anonymity~\cite{Kawamoto:18:ISITA}.
We are also interested in clarifying the relationships between our distributional Kripke model and the main stream probabilistic epistemic logic assigning probabilities to worlds.
Furthermore, we plan to develop statistical epistemic logic for process calculi in an analogous way to~\cite{Chadha:09:Forte,Hughes:04:JCS,Dechesne:07:LPAR,Chatzikokolakis:12:TCL}, 
and to investigate the relationships between statistical epistemic logic and bisimulation metrics analogously to~\cite{Chatzikokolakis:14:concur}.

\section*{Acknowledgments}
I would like to thank the reviewers for their helpful and insightful comments.
I am also grateful to Ken Mano, Gergei Bana, and Ryuta Arisaka for their useful comments on preliminary manuscripts.

 \bibliographystyle{splncs04}
 \bibliography{short}

\arxiv{
\appendix
\section{Properties of Probability Quantification}
\label{sec:proofs:quantifier}
In this section we present the proofs for properties of probability quantification.

\AxiomsNotPIQ*
\begin{proof}
We show the first claim as follows.
By the definition of semantics,
$\M, w \models \neg \PR{I} \psi$ is logically equivalent to 
$\Pr\!\left[ s \randassign w :~ s \models \psi \right] \not\in I$, 
which is equivalent to
$\Pr\!\left[ s \randassign w :~ s \models \psi \right] \in~I^c$,
namely, $\M, w \models \PR{I^c} \psi$.

Next we show the second claim as follows.
By the definition of semantics,
$\M, w \models \PR{I} \neg \psi$ is logically equivalent to 
$1 - \Pr\!\left[ s \randassign w :~ s \models \psi \right] \in I$, i.e.,
$\Pr\!\left[ s \randassign w :~ s \models \psi \right] \in \ov{I}$.
This is equivalent to $\M, w \models \PR{\ov{I}} \psi$.
\qed
\end{proof}

\section{Properties of the Epistemic Operators}
\label{sec:proofs:epistemic}
In this section we present properties of our epistemic operators and their proofs.

\AxiomsMinimal*
\begin{proof}
We first show ({\bf N}) necessitation rule as follows.
Assume that $\M \models \phi$.
Then for any world $w$ in $\M$, we have $\M, w \models \phi$.
Hence $\M, w \models \MKa \phi$.
Therefore the necessitation rule holds.

Next we show ({\bf K}) distribution axiom as follows.
Let $w$ be a possible world in $\M$.
Assume that $\M, w \models \MKa (\phi_0 \rightarrow \phi_1)$, and that $\M, w \models \MKa \phi_0$.
Let $w'$ be any world such that $(w, w')\in\Raeps$.
Then we have $\M, w' \models \phi_0 \rightarrow \phi_1$ and $\M, w' \models \phi_0$, hence $\M, w' \models \phi_1$.
Thus we have $\M, w \models \MKa \phi_1$.
Therefore we obtain $\M \models \MKa (\phi_0 \rightarrow \phi_1) \rightarrow  (\MKa \phi_0 \rightarrow \MKa \phi_1)$.
\qed
\end{proof}

\AxiomsDivergence*
\begin{proof}
Let $w$ be a possible world in $\M$.

We first show ({\bf T}) reflexivity as follows.
Assume that $\M, w \models \MKaz \phi$.
By $(w, w)\in\Raz$, we have $\M, w \models \phi$.
Therefore, we obtain $\M \models \MKae \phi \rightarrow \phi$.

Next we show ($\bf{\ge}$) comparison of observability as follows.
Assume that $\M, w \models \MKae \phi$.
Let $w'$ be any world such that $(w, w')\in\Raeps$.
Then $\M, w' \models \phi$.
By $\varepsilon' \le \varepsilon$ and the definition of $\Raeps$, we have $\calr_{a,\varepsilon'}\subseteq\Raeps$, hence $(w, w')\in\calr_{a,\varepsilon'}$.
Then $\M, w \models {\mathop{\mathsf{K}_{a,\varepsilon'}}} \phi$.
Therefore we obtain 
$\M \models {\mathop{\mathsf{K}_{a,\varepsilon}}} \phi \rightarrow {\mathop{\mathsf{K}_{a,\varepsilon'}}} \phi$.

Finally, we show ({\bf B}) symmetry when $\Raeps$ is symmetric.
Assume that $\M, w \models \phi$.
Let $w'$ be any world such that $(w, w')\in\Raeps$.
Since $\Raeps$ is symmetric, we have $(w', w)\in\Raeps$.
By $\M, w \models \phi$, we obtain $\M, w' \models \MPae \phi$.
Hence $\M, w \models \MKae \MPae \phi$.
Therefore we obtain $\M \models \phi \rightarrow \MKae \MPae \phi$.
\qed
\end{proof}

\AxiomsMetric*
\begin{proof}
Since a metric satisfies the definition of a divergence (in Section~\ref{sec:preliminaries}), a metric-based accessibility relation is also a divergence-based accessibility relation.
Therefore we obtain {\rm ({\bf T})} reflexivity and {\rm ({\bf B})} symmetry from Proposition~\ref{prop:axioms:divergence}.

Next we show ({\bf 4q}) quantitative transitivity as follows.
Let $w$ be a possible world in $\M$.
Assume that $\M, w \models {\mathop{\mathsf{K}_{a,\varepsilon+\varepsilon'}}} \phi$.
Let $w'$ be any world such that $(w, w')\in\Raeps$, and
$w''$ be any world such that $(w', w'')\in\calr_{a,\varepsilon'}$.
By definition, we have 
$\diverge{\sigma_{w}(x)}{\sigma_{w'}(x)} \le \varepsilon$ and
$\diverge{\sigma_{w'}(x)}{\sigma_{w''}(x)} \le \varepsilon'$.
By the subadditivity of the divergence $D$, we have 
$\diverge{\sigma_{w}(x)}{\sigma_{w''}(x)} \le 
\diverge{\sigma_{w}(x)}{\sigma_{w'}(x)} + \diverge{\sigma_{w'}(x)}{\sigma_{w''}(x)} \le
\varepsilon + \varepsilon'$,
hence $(w, w'')\in\calr_{a,\varepsilon+\varepsilon'}$.
Then it follows from $\M, w \models {\mathop{\mathsf{K}_{a,\varepsilon+\varepsilon'}}} \phi$ that $\M, w'' \models \phi$.
By the definition of $w''$, we obtain 
$\M, w \models \MKae {\mathop{\mathsf{K}_{a,\varepsilon'}}} \phi$.
Therefore we have
$\M \models {\mathop{\mathsf{K}_{a,\varepsilon+\varepsilon'}}} \phi \rightarrow \MKae {\mathop{\mathsf{K}_{a,\varepsilon'}}} \phi$.

We next show ({\bf 5q}) relaxed Euclidean as follows.
Let $w$ be a possible world in $\M$.
Assume that $\M, w \models {\mathop{\mathsf{P}_{a,\varepsilon}}} \phi$.
Then there exists a world $w'$ such that $\M, w' \models \phi$ and $(w, w') \in \Raeps$.
Let $w''$ be any world such that $(w, w'')\in\calr_{a,\varepsilon'}$.
Since $\calr_{a,\varepsilon'}$ is a metric-based accessibility relation, it is symmetric, hence $(w'', w) \in\calr_{a,\varepsilon'}$.
Then by $(w, w') \in \Raeps$, we obtain $(w'', w')\in\calr_{a,\varepsilon+\varepsilon'}$.
Hence
$\M, w \models {\mathop{\mathsf{K}_{a,\varepsilon'}}} {\mathop{\mathsf{P}_{a,\varepsilon+\varepsilon'}}} \phi$.
Therefore we obtain 
$\M \models {\mathop{\mathsf{P}_{a,\varepsilon}}} \phi \rightarrow {\mathop{\mathsf{K}_{a,\varepsilon'}}} {\mathop{\mathsf{P}_{a,\varepsilon+\varepsilon'}}} \phi$.

Finally, for $\varepsilon = \varepsilon' = 0$, ({\bf 4}) transitivity and ({\bf 5}) Euclidean respectively follow from ({\bf 4q}) quantitative transitivity and ({\bf 5q}) relaxed Euclidean.
\qed
\end{proof}

}

\end{document}